\theoremstyle{definition}
\newtheorem{definition}{\textbf{Definition}}[section]
\theoremstyle{remark}
\begin{document}

\title{Com-CAS: Effective Cache Apportioning Under Compiler Guidance}

\author{Bodhisatwa Chatterjee}
\email{bodhi@gatech.edu}
\affiliation{%
  \institution{Georgia Institute of Technology}
  \city{Atlanta}
  \state{GA}
  \country{USA}
}

\author{Sharjeel Khan}
\email{smkhan@gatech.edu}
\affiliation{%
  \institution{Georgia Institute of Technology}
  \city{Atlanta}
  \state{GA}
  \country{USA}
}

\author{Santosh Pande}
\email{santosh.pande@cc.gatech.edu}
\affiliation{%
  \institution{Georgia Institute of Technology}
  \city{Atlanta}
  \state{GA}
  \country{USA}
}

\renewcommand{\shortauthors}{Bodhisatwa Chatterjee, Sharjeel Khan, Santosh Pande.}

\begin{abstract}
With a growing number of cores in modern high-performance servers, effective sharing of the last level cache (LLC) is more critical than ever. The primary agenda of such systems is to maximize performance by efficiently supporting multi-tenancy of diverse workloads. However, this could be particularly challenging to achieve in practice, because modern workloads exhibit \textit{dynamic phase behaviour}, which causes their cache requirements \& sensitivities to vary at finer granularities during execution. Unfortunately, existing systems are oblivious to the application phase behavior, and are unable to detect and react quickly enough to these rapidly changing cache requirements, often incurring significant performance degradation. 

In this paper, we propose \textit{Com-CAS}, a new apportioning system that provides dynamic cache allocations for co-executing applications. \textit{Com-CAS} differs from the existing cache partitioning systems by adapting to the dynamic cache requirements of applications just-in-time, as opposed to reacting, without any hardware modifications. The front-end of \textit{Com-CAS} consists of compiler-analysis equipped with machine learning mechanisms to predict cache requirements, while the back-end consists of proactive scheduler that dynamically apportions LLC amongst co-executing applications leveraging Intel Cache Allocation Technology (CAT). \textit{Com-CAS}'s partitioning scheme utilizes the compiler-generated information across finer granularities to predict the rapidly changing dynamic application behaviors, while simultaneously maintaining data locality. Our experiments show that \textit{Com-CAS} improves average weighted throughput by \textbf{15\%} over unpartitioned cache system, and outperforms state-of-the-art partitioning system \textit{KPart}~\cite{el2018kpart} by \textbf{20\%}, while maintaining the worst individual application completion time degradation to meet various Service-Level Agreement (SLA) requirements.

\end{abstract}

\maketitle

\section{Introduction}
Modern high performance systems facilitate concurrent execution of multiple applications by sharing resources among them. The \textbf{Last-Level Cache} (LLC) is one such resource, which is usually shared by all running applications in the system. However, sharing LLC often results in \textit{inter-application interference} \cite{temam1994cache, wu2011characterization}, where accesses from multiple applications can map on to the same cache line and incur conflict misses, resulting in performance degradation. This problem is futher exacerbated by the fact that the number of cores, and the processes, that share the LLC, are rapidly increasing in the recent architectural designs. To address this issue, modern processors use \textbf{Cache Partitioning} to divide the LLC among the co-executing applications in the system. The main goal of cache partitioning schemes is to obtain performance isolation to secure dedicated regions of cache memory to high-priority cache-intensive applications to ensure their performance. Apart from achieving \textit{superior application performance} and \textit{improving system throughput} \cite{molnos2006throughput, muralidhara2010intra, el2018kpart, pons2020phase, wang2017swap, sanchez2011vantage, xu2018dcat, yuan2021don, noll2018accelerating, su2018scp}, cache partitioning can also improve \textit{system utilization, power and energy consumption} \cite{ravindran2007compiler, cook2013hardware, kasture2015rubik, nikas2019dicer, shen2019reuse, nejat2020coordinated, huang2018curve, xiang2018dcaps}, ensure \textit{fair resource allocation} \cite{sari2019fairsdp, selfa2017application, zhang2020hica, park2019copart, xiao2020novel, saez2021lfoc+, roy2021satori} and even enable \textit{worst case execution-time analysis} \cite{mueller1995compiler}. Due to such significant benefits of cache partitioning, modern processor families (Intel{\textregistered} Xeon Series) implement hardware way-partitioning through \textbf{Cache Allocation Technology} (CAT) \cite{herdrich2016cache}, which provides the users with extended control and flexibility to customize cache partitions as per their requirements.

On the other hand, modern workloads exhibit \textbf{dynamic phase behaviour} \cite{fried2020caladan, iorgulescu2018perfiso} due to which their resource requirements rapidly change throughout their execution cycle. These changes in phase behaviours arise from the complex control flows, diverse and irregular memory referencing behaviors in the applications, and different sizes of the problems tackled by the loops, which lead to variant footprints and different reuse patterns. As a result, it is often the case that a single program region (such as a given loop in a program) can exhibit variant data reuse behaviour during multiple invocations. Prior works \cite{cook2013hardware, selfa2017application, wang2017swap, xu2018dcat, nikas2019dicer, suo2009system, moreto2007explaining, liang2017exploring, pons2020phase} on cache partitioning tend to classify workloads into categories based on different levels of cache sensitivity. Based on this offline workload characterization, cache allocations are then decided though a mix of static and runtime policies. The fact that a single application can exhibit \textit{dual behaviour}, i.e both cache-sensitive \& cache-insensitive behaviour during its execution is not taken into account by such schemes. This often leads to cache being injudiciously partitioned, and causes performance degradation (\S\ref{sec:overview}). Such static schemes do not account for dynamic loop characteristics dependent on variant loop bounds, memory footprints and reuse behaviors.

Some cache partitioning approaches \cite{cook2013hardware, qureshi2006utility, wang2016secdcp, suh2004dynamic, ye2014coloris, gupta2015spatial} adapt a \textit{`damage-control'} strategy, where the partitions are dynamically adjusted \textit{after} detecting changes in application behaviour using runtime monitoring. Unfortunately, these approaches are \textit{reactive}, and suffer from the problem of \textit{detection and reaction lag}, i.e. they detect the phase change only after a certain interval after its occurrence, and the application behaviour of workloads are likely to change before the adjustments are made, leading to lost performance (\S \ref{sec:background}).

\textit{In this work, we study the problem of providing smart cache partitioning by accounting for applications' dynamic phase behaviors, just-in-time {\it before entering such cache-sensitive program regions}}. Our end goal is to obtain superior performance \& maximal isolation by reducing inter-application interference in multi-execution environments. We propose \textbf{Compiler-Guided Cache Apportioning System (Com-CAS)}, which estimates the dynamic cache requirements for all co-executing applications just-in-time, and then uses this information to perform runtime cache partitions.

\textbf{Probes Compiler Framework} (\S\ref{sec:prob}): \textit{The first contribution of this paper is a compiler framework that estimates applications' cache requirements just-in-time, through a combination of static \& dynamic program attributes} (Algorithm \ref{a3}). Program attributes (\textit{memory footprint, data reuse behaviour, phase timing, cache sensitivity}) that dictate the cache requirements are obtained via traditional compiler analysis coupled with machine learning algorithms. The attributes are then encapsulated in specialized library markers called \textbf{`probes'}, which are statically instrumented outside every loop-nest in the application. The probes communicate these attributes during program execution for cache management. This represents the front-end of \textit{Com-CAS}. 

\textbf{BCache Allocation Framework} (\S\ref{sec:bcache}): \textit{The second contribution of this paper is a new runtime apportioning mechanism, consisting of phase-aware allocation algorithms for determining LLC partitions, based on aggregation of probes-generated information}. The dynamic apportioning mechanism of BCache Framework first detects phase overlaps of co-executing applications and then minimizes application interference. It does so  by avoiding co-location of processes with high data-reuse, and by grouping processes with minimal phase overlap, while maintaining data locality. A net result of such apportioning decisions making is that LLC misses are shifted from applications with higher cache sensitivity (high sensitivity reuse) needs to the lower ones (lower sensitivity reuse or streaming) (\S \ref{res:cache}).

\textbf{Proactive Scheduling}: \textit{The third contribution of this paper is an user-level scheduler that interacts with Intel CAT to perform the actual cache partitioning according to application phase changes.} The phase change occurs whenever an executing application changes its reuse-behaviour or exhibits a significant variation in its memory footprint, which necessitates cache re-allocation. The phase information is relayed by probes to the scheduler, which in-turn invokes apportioning algorithms to obtain phase-based cache partitions for co-executing applications. Anecdotally, compiler-driven analyses usually focus on one application at a time (including locality-promoting loop transformations) to improve single application performance in isolation, and are oblivious to other co-executing applications in a system.  Designing an user-level scheduler allows us to aggregate compiler-generated attributed across all applications, thus making the way for system-wide decision making. To the best of our knowledge, this is the first compiler approach that is able to take a "global" view of the cache sensitivity problem in a shared setting with good performance gains over the current state-of-the-art \cite{el2018kpart}.      

We evaluated \textit{Com-CAS} on an Intel Xeon Gold system with 35 application mixes from a diverse sets of workloads \cite{beamer2015gap, che2009rodinia, pouchet2012polybench} that are typically encountered in a multi-tenant environments (\S\ref{sec:res}). Our target applications \footnote{The current evaluation is throughput-oriented, and adheres to a fixed latency degradation limit. Currently, we do not focus on latency-critical applications.} are loop-heavy, data driven, memory-bound and cache-sensitive workloads from domains such as machine learning, scientific computations, graph analytics, which typically exhibit phased execution behavior. The average weighted throughput gain obtained over all the mixes is \textbf{15\%} (upto \textbf{35\%}). In all the instances, no degradation was observed over a vanilla co-execution environment (with no cache apportioning scheme). In addition, we show that our proactive cache partitioning scheme outperforms state-of-art apportioning system \textit{KPart} \cite{el2018kpart} by \textbf{20}\%, and other hardware counter-based apportioning policies by \textbf{30\%} on average. We also show the \textit{Com-CAS} maximises the isolation of reuse heavy loops reducing inter-application interference simultaneously while ensuring fairness and maintaining individual process latencies. 

\section{Background and Motivation}
\label{sec:overview}

Determination of memory requirements and its allocation for applications running in a multi-execution environment is quite a complex task. Typically, in data centers, for maximizing resource utilization, multiple processes execute concurrently in a batch, while sharing the same LLC \cite{amvrosiadis2018diversity}. The cache demand for each process is dependent on the specific program point, as well as the underlying execution phase, and can change from loop nest to loop nest (Fig. \ref{case_study}). Not accounting for these dynamic variations in cache requirements while partitioning the cache can cause significant performance degradation \cite{sanchez2011vantage, yuan2021don}. Thus, a good cache apportioning scheme must take into consideration dynamic concurrency of the appropriate phases of co-executing applications. For example, if loop nest 1 of application 1 is co-executing with loop nest 2 of application 2, the apportioning will be quite different than loop nest 2 of application 1 co-executing loop nest 2 of application 2. None of the current works analyzes and apportions at this level of granularity leading to sub-optimal performance. Moreover, since none of the approaches are compiler-based, they lack the apriori ability needed to {\it predict} the cache behaviors. Table \ref{tab:pol} shows the comparison of four different apportioning policies (\S\ref{sec:res}) operating on a 6-process application mix from \textit{PolyBench}.  As we can observe, apportioning policies that are agnostic to application phases, lead to \textit{injudicious cache partitioning}, causing slowdowns ranging from \textbf{10}\% to \textbf{20}\%, whereas \textit{Com-CAS} leads to a speed-up of \textbf{27.7\%} over the baseline of unpartitioned, unmanaged cache allocation. 

\begin{table}[ht]
\caption{Comparison of various cache allocation policies for a 6-process mix ($Cholesky,Floyd\-Warshall,LudCmp,Lu,Corr,Cov$)}
\resizebox{\columnwidth}{!}{
\begin{tabular}{|l|l|l|l|l|}
\hline
\textbf{Partition Scheme} & \textbf{Policy Type} & \textbf{Interval} & \textbf{Exec Time} & \textbf{Performance} \\ \hline
Max Cache Ways & Static & NA & 1230.41 sec & 19.82\% (Slowdown) \\ \hline
Perf Counter-based & Dynamic (Reactive) & 250 ms & 1234.4 sec & 20.21\% (Slowdown) \\ \hline
KPart & Dynamic (Reactive) & $20^6$cycles & 1119.43 sec & 10.3\% (Slowdown) \\ \hline
Com-CAS & Dynamic (Proactive) & Adaptive & 742.3 sec & 27.7\% (Speedup) \\ \hline
Baseline & Unpartitioned Cache & NA & 1014.39 sec & Baseline \\ \hline
\end{tabular}}
\label{tab:pol}
\end{table}

Modern workloads executing concurrently in multi-tenant environments exhibit variant resource requirements in their nested-loop regions. Leveraging compiler analysis enables us to anticipate these rapid phase changes right at its onset (i.e at the loop entrances), and allows us to adapt the apportioning decisions at a few millisecond granularity. Also, if the differences in cache requirements between successive loop-nests are insignificant, no allocation change is carried out by \textit{Com-CAS} in order to prevent hysteresis. Thus Intel CAT is invoked only when the phase-change entails substantial change in cache requirements.

Finally, our work leverages Intel CAT \cite{herdrich2016cache}, which is a reconfigurable implementation of way-partitioning that allows the user to specify cache capacity designated to each running application. 

\subsection{Intel Cache Allocation Technology (CAT)}
Intel{\textregistered} Cache Allocation Technology (CAT) \cite{herdrich2016cache} allows the user to specify cache capacity designated to each running application. The primary objective of Intel CAT is to improve performance by prioritizing and isolating critical applications. Through isolation, Intel CAT also provides powerful means to stops certain kinds of side channel and DoS attacks. It provides a customized implementation of way-partitioning with a ``software programmable'' user-interface, that is used to invoke in-built libraries to perform cache partitioning. 

To reason about different cache partitions, Intel CAT introduces the notion of \textbf{Class-of-Service} (CLOS), which are distinctive groups for cache allocation. The implication of the CLOS abstraction is that one or more applications belonging to the same CLOS will experience the same cache partition. Intel CAT framework allows users to specify the amount of cache allocated to each CLOS in terms of Boolean vector-like representations called \textbf{Capacity Bitmasks} (CBMs). These are used to specify the entire cache configuration for each CLOS, including isolation and overlapping of ways between one or more CLOS. On the hardware side, Intel CAT uses arrays of MSRs to keep track of CLOS-to-ways mappings. Minor changes are then done in the resource allocation mechanism of Linux Kernel to interact with these dedicated registers. In our work, we use apportioning algorithms to generate CBMs for a process residing in a particular CLOS. We then use specialized library calls to interact with the Intel CAT interface to perform the required allocation.

\textbf{Adaption of Com-CAS for other (Non-Intel) architectures}: In this version, Com-CAS relies on an re-configurable cache partitioning mechanism that allows users to customize the cache partitions. The phase requirement algorithms and the apportioning algorithms are not dependent on any specific architecture as they maximize isolation of cache-sensitive applications. However, the library that interfaces Com-CAS with Intel CAT might need to be re-implemented to work with a new re-configurable cache apportioning system - in case Com-CAS is to be adapted to an entirely new architecture.


\section{Probes Compiler Framework}
\label{sec:prob}

\begin{figure*}
  \includegraphics[width=1.0\linewidth]{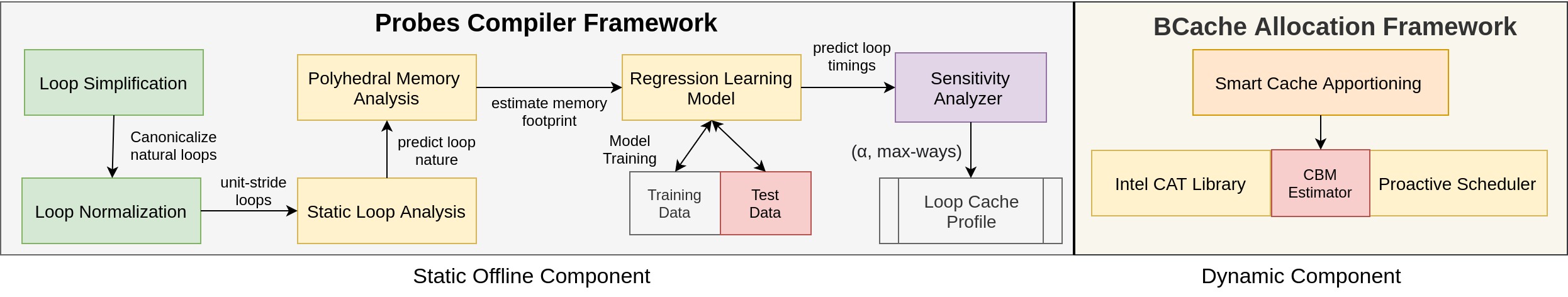}
  \caption{Overview of \textit{Com-CAS} consisting of Probes Compiler Framework (\S \ref{sec:prob}) \& BCache Allocation Framework (\S \ref{sec:bcache})}\label{fig:system}
\end{figure*}

This section describes the front-end (compiler phase) of \textit{Com-CAS} (Fig. \ref{fig:system}). \textbf{Probes Compiler Framework} is a LLVM-based, instrumentation framework equipped with machine learning mechanisms to determine an application's resource requirements across various execution points. It inserts `\textit{probes}' (specialized library markers) at outermost nested-loops within each function of an application. These probes estimate the attributes that dictates an application's cache requirements for each execution phase: \textbf{memory footprint}, \textbf{cache sensitivity}, \textbf{reuse behaviour} \& \textbf{phase timing}. In our work, we consider each nested loop to potentially constitute a new \textit{execution phase}. 

\subsection{Estimating Application Cache Behaviour}

The first three attributes (footprint, sensitivity \& reuse behaviour) quantify the cache requirement during a phase, while loop-timing determines how long a particular phase will last. The memory footprint and reuse behaviour can be statically analyzed as closed-form expressions, while the phase timing must be predicted using learning regression models which can be evaluated at runtime using dynamic values. Such models are statically inserted before the loop nest.  In order to incorporate both the trained model \& compute attributes dynamically, the Probes Framework has two components: \textit{compilation component} and \textit{runtime component}.

\begin{figure}[htp]
\centerline{\includegraphics[width=0.9\linewidth]{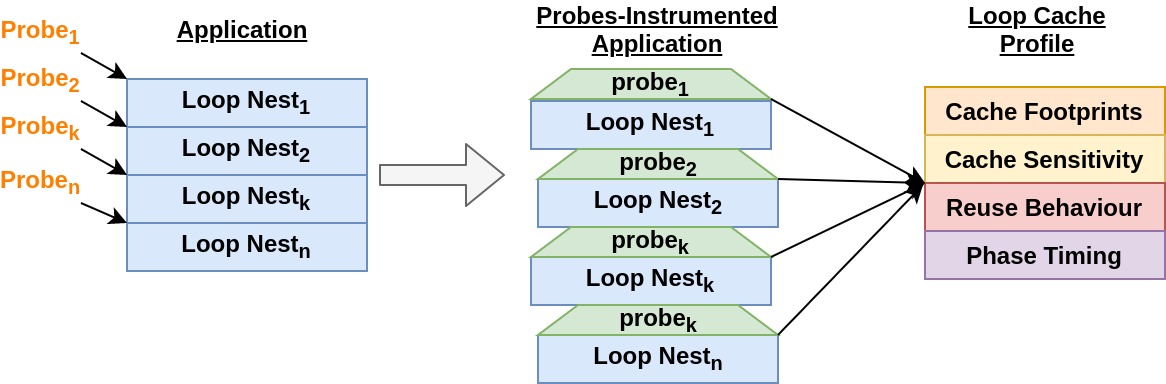}}
\vspace{-0.1in}
\caption{Cache profile for each loop nests are generated \& encapsulated in Probes}
\label{fig:prob}
\end{figure}

The \textit{compilation component} primarily consists of multiple LLVM \cite{lattner2004llvm} compiler passes which instrument probes into application source code and embeds loop memory footprint usage, data-reuse behaviour analysis and trained phase-timing models. Apart from encapsulating loop attributes, the compilation component also inserts a special probe-start function in the preheader of the loop nest and a probe-completion function at the exit node of the loop nest. For loops present in the inner-nesting levels, the probe functions are hoisted outside the outermost loop inter-procedurally \footnote{Please refer to appendix \S C for discussion on recursive calls}. During hoisting, the pass combines all attributes of inner-loops with the outermost loop. For example, if any of the inner-most loop exhibits significant data reuse, then the entire loop-nest is considered to be a reuse-heavy phase. \textit{The runtime component} on the other hand, complements the compilation component by dynamically computing the values of memory footprint usage \& phase-timing and conveying them to the BCache Framework. This communication is facilitated by passing the attributes as arguments to the probe library function calls, which are further transferred to the scheduler via shared memory. 

\begin{algorithm}[htbp]
\footnotesize
\SetAlgoLined
\textbf{Input:} Application $A$\\
\KwResult{Estimate the loop timing, cache footprint, data-reuse behaviour, cache sensitivity}
\For{each loop-nest $L \in A$}
{
\textbf{****** Learning Loop Timing ******}\\
L.normalize()\\
$U \leftarrow getAllUpperBounds(L)$\\
$W \leftarrow intializeWeights()$\\
Generate Training \& Testing Dataset (T, W) with representative inputs\\
Learn the linear regression equation $T_W(u) = W^T U$\\
\textbf{****** Estimating Loop Memory Footprint ******}\\
$N \leftarrow obtainLoopTripCount()$\\
Analyze affine memory-accesses $m(X),~ \forall X \in \{0, N \}$\\
Obtain polyhedral mapping $[X] \rightarrow \{ m(X) : 0 < X < N \}$\\
\textbf{****** Characterizing Loop Data Reuse Behaviour ******}\\
$srd_L \leftarrow calcSRD(L)$\\
\eIf{$srd_L > \Delta$}{$L.type = reuse$}{$L.type = stream$}
}
\textbf{****** Analyzing Application Cache Sensitivity ******}\\
$W_T \leftarrow getTotalCacheWays()$ \\ 
$c \leftarrow calcMaxWays(A)$\\
\For{$i \in \{ 3, W_{max}\}$}
{Generate application timings $t_i, t_{i-1}$ with representative inputs\\
$\Delta t_i = \left | t_{i} -t_{i-1} \right |$\\
$\Delta w_i = \left | w_i -w_{i-1}  \right |$\\
$\alpha_A = \sum_{i=3}^{\max{ways}} \frac{\left |\Delta t_i  \right |}{\Delta w_i}$}
 \caption{Phase Cache Requirement Estimation}\label{a3}
\end{algorithm}
The entire cache requirement estimation process is summarized in Algorithm \ref{a3}. We will now describe in detail how each steps of this algorithm computes the program attributes.

\subsubsection{\textbf{Phase Timing Model}}
\label{prob:time}
The phase-timing is defined as the time taken for executing an entire loop-nest. Probes Compiler Framework uses a linear regression model to predict the execution time for each loop-nest. The underlying idea is to establish a numerical linear relation between loop-timing and loop-iterations. 

\begin{theorem}[Phase Timing] \label{th1}
For a normalized loop nest L with $n$ inner-nested loops with individual upper-bounds $\{U_1,U_2,...,U_n\}$, the timing $T_c$ is given by the linear equation:
\begin{equation}\label{equation:lm}
T_c = \mathbf{U}^T \mathbf{C} = c_0 + c_1 u_1 + c_2 u_2+...+ c_n u_n
\end{equation}
where $C = \{c_0, c_1, .. c_n\}$ are learnable parameters, feature vector $\mathbf{U} = \{u_0,u_1,..,u_n\}$ and feature $u_i$ is mapped from loop upper-bounds $U_i$ as $u_i = \prod_{k = 1}^{i} U_k$
\end{theorem}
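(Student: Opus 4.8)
The plan is to derive the linear form by a \emph{structural counting} argument followed by a constant-cost model. The key lemma I would establish first is that, in a \emph{normalized} nest (lower bounds zero, unit stride), the number of times any static instruction executes is a fixed, data-independent combination of the cumulative products $u_0 = 1, u_1, \ldots, u_n$ with $u_i = \prod_{k=1}^{i} U_k$. Concretely, by induction on nesting depth: loop $i$ is entered $u_{i-1}$ times, so any straight-line body code at depth $i$ runs $u_i$ times, its induction-variable increment runs $u_i$ times, and its exit condition is tested $u_i + u_{i-1}$ times (the extra $u_{i-1}$ accounting for the failing test that terminates each entry). Thus every instruction's execution count is a non-negative integer combination of the $u_j$, determined purely by where it sits in the nest and independent of the data.

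Second, under the modeling assumption that each static instruction has a constant per-execution cost, the total runtime is
\[
T_c = \sum_{\text{instructions } \iota} \mathrm{cost}(\iota)\cdot \mathrm{count}(\iota),
\]
and substituting the counts from the lemma makes this a linear combination of the $u_j$. Regrouping the terms by feature and collecting the one-time prologue/epilogue overhead into the coefficient of $u_0 = 1$ yields exactly $T_c = c_0 + \sum_{i=1}^{n} c_i u_i = \mathbf{U}^T\mathbf{C}$, with each $c_i$ identified as an aggregate of all per-execution costs that scale with $u_i$. This also explains why the raw bounds $U_i$ are the wrong features and the cumulative products $u_i$ are the right ones.

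The hard part is \emph{justifying} the constant-cost assumption, which is what turns the statement from a tautology into a genuine claim about real programs. On actual hardware the per-execution cost of an instruction is not constant: it varies with cache-hierarchy state, branch (mis)prediction, and any data-dependent control flow in the body, so the exact identity cannot hold cost-for-cost. My resolution would be to interpret the $c_i$ as \emph{expected} per-iteration costs and the equation as holding in expectation over representative inputs; the coefficients are then not computed in closed form but fit empirically by the regression step of Algorithm~\ref{a3}, which absorbs the average deviations from the idealized model into the learned parameters $\mathbf{C}$. Under this reading the theorem's real content is that the cumulative products $u_i$ constitute a \emph{complete feature set} capturing the full first-order dependence of runtime on the loop bounds, so that a linear model over them is well specified.
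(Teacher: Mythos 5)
Your proposal is correct and reaches the same conclusion, but by a genuinely different route than the paper. The paper's proof is a chain of modeling steps: it posits that timing is proportional to loop bounds, invokes LLVM's \emph{loop-simplify}/\emph{loop-normalization} passes to get unit-stride loops, applies loop distribution to reduce imperfect nests to a series of perfect ones so that the timing decomposes as $T = f_1(U_1) + f_2(U_1,U_2) + \dots + f_n(U_1,\dots,U_n)$, and then \emph{assumes} each $f_i$ is linear in the iteration count of its nest level (justifying the restriction to degree one empirically, on the grounds that higher-degree polynomials overfit), which yields the feature map $u_i = \prod_{k=1}^{i} U_k$. Your argument instead derives the feature basis from first principles: the instruction-counting lemma (any static instruction at depth $i$ executes a fixed integer combination of the cumulative products $u_{i-1}, u_i$) plus a constant per-execution cost model forces the linear form, so the cumulative products are not an assumption but a consequence of how execution counts compose in a normalized rectangular nest. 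This buys two things the paper's proof lacks: an explanation of \emph{why} the products, rather than the raw bounds $U_i$, are the correct and complete features, and the elimination of the loop-distribution step, since straight-line code at any depth is handled directly by the counting argument. Conversely, the paper's route makes explicit the empirical rationale for truncating at degree one, and its proportionality assumption quietly absorbs the hardware-level cost variation that your proof must confront head-on; your resolution (interpret the $c_i$ as expected per-execution costs, with the regression of Algorithm~\ref{a3} absorbing deviations) is the right one and is in fact how the paper's learned coefficients should be understood, though the paper never says so. Both arguments share the same unstated scope restriction: the bounds $U_i$ must be invariant within the nest (rectangular iteration spaces), since triangular or data-dependent bounds break both the product formula for execution counts and the paper's decomposition.
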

\begin{proof}
 In general, for a loop with arbitrary statements in its body, the \textit{loop-timing} is proportional to the loop bounds. Thus, for a loop $L$, with time $T$ and Bounds $B$, we have:
 \begin{gather}
      T \propto B\\ 
      \Rightarrow T = \beta * B
 \end{gather}
 For nested-loops, bounds of each inner-loops must be included as well. Thus, we have:
  \begin{gather}\label{eq:b1}
     T =  f(B_1, B_2, ..., B_n)
 \end{gather}
 where $B_i$ is the loop bound for the $i^{th}$ inner-nested loop. 
 To make this analysis easier, Probes Framework uses the LLVM's \cite{lattner2004llvm} \textit{loop-simplify} \& \textit{loop-normalization} passes, which transforms each loop to lower-bound of 0 and unit-step size, and upper-bound ($u_i$).
 Thus, eq. \ref{eq:b1} becomes:
 \begin{equation}
    T =  f(u_1, u_2, ..., u_n) 
 \end{equation}
 Furthermore, to enable analysis of imperfectly nested-loops, loop distribution can be performed to transform them into a series of perfectly-nested loops. Thus, the phase-timing equation can be decomposed sum of individual $n$ distributed loops timings ($U_i$ represents loop upper bounds): 
\begin{equation}\label{equation:loopLB1}
    T = f_1(U_1) +  f_2(U_1, U_2) ... + f_n(U_1, U_2, ..., U_n)
\end{equation}
In this work, we assume a linear relationship between loop bounds and loop timing (any higher degree polynomial lead to overfitting). Thus, Eq. \ref{equation:loopLB1} can be further mapped into features for a linear regression model, i.e $U_1 = u_1$, $U_1*U_2 = u_2$ and so on. Rewritting, we obtain:
\begin{equation}
    T_c = c_0 + c_1 u_1 + c_2 u_2 +...+ c_n u_n
\end{equation}
where $u_i = (U_1 * U_2 *..* U_i) =  \prod_{k = 1}^{i} U_k$. This creates a mapping from the individual loop bound $U_i$ to the individual feature $u_i$
\end{proof}

Eq. \ref{equation:lm} can be interpreted as a linear-regression model $T_c(u)$ with weights $c_1,c_2,...c_n$ \& intercept $c_0$, where the weights can be learned. Probes framework uses a compiler pass to generate loop-bounds \& the timing for a subset of application inputs to generate training data for the regression model. The training data comprises of multiple program inputs, which the learned co-efficients $c_0, c_1,c_2,...c_n$ generalizable across different inputs. Once the model is trained, the timing equation for each loop-nest is embedded in its corresponding probe. During runtime, the actual loop-bounds are plugged into this phase-timing equation to generate the phase-time. For non-affine loops that have unknown bounds, we generate an approximate loop-bound based on the test input sets to predict the timing.

\subsubsection{\textbf{Loop Memory Footprint}}
\label{prob:foot}
\textit{Memory footprint} determines the amount of cache that will be utilized by an application during an execution phase. Probes framework calculates the memory footprint through static \textit{polyhedral analysis} \cite{grosser2011polly}, which is a standard compiler technique for analyzing the memory accesses in a loop nest. 

Polyhedral analysis generates memory footprint equations of the form: 
\begin{equation}
    [X] \rightarrow \{ m(X) : 0 < X < N \}
\end{equation} 
which represents a mapping from the loop iterations to the dynamic memory accesses by a statement and $N$ is the expected iterations of the loop nest. Therefore, at runtime with the value of loop upper-bound $N$, probes calculates the exact footprint \& passes it to the runtime component.

\subsubsection{\textbf{Classifying Reuse Behaviour}}
\label{prob:reuse}
Most applications exhibit temporal and spacial locality across various loop iterations. To fully utilize the locality benefits, the blocks of memory with reuse potential must remain in the cache, along with all the intermediate memory accesses. To determine the amount of cache required by a loop-nest for maximizing locality, we need to obtain a sense of reuse behaviour exhibited by the loop-nest. To classify reuse behaviour, Probes Framework defines \textbf{Static Reuse Distance (SRD)} as follows: 
\begin{definition}[Static Reuse Distance (SRD)]
\textit{The Static Reuse Distance between program statements $S_1$ and $S_2$ is a measure of the number of memory instructions between two accesses to the same memory location from $S_1$ and $S_2$.} 
\end{definition}
Based on this metric, the loop-nests could either be classified as \textbf{Streaming} - if its SRD is negligible, or else \textbf{Reuse} - if the SRD is significant and the reuse occurs over a large set of intermediate memory instructions. Consequently, reuse loops require significantly larger cache than streaming loop and this has to be accounted while deciding cache apportioning. Typically in such cases, the reuse is carried by the outer loop and a large number of inner loop iterations must be completed before the reuse occurs thus needing a larger cache size to hold the outer loop references. Appendix \S C shows an example of how SRD can be leveraged to classify data-reuse behaviour of loop nests.

At runtime, the exact SRD value is computed dynamically and passed on to the probe functions. Similar to footprint estimation, approximate bounds are generated to compute SRD for loops whose bounds are unknown. Also, loops containing indirect subscripts or unanalyzable access patterns are classified as \textit{reuse} since at runtime they could potentially exhibit a data reuse.
\subsubsection{\textbf{Quantifying Cache Sensitivity}}
\label{sec:max}
Intuitively, the performance of a cache-sensitive application should improve with increase in allocated cache size. However, most applications often experience performance saturation - after a certain number of allocated cache-ways, there is no further performance benefit with more cache allocation. Furthermore, before the performance saturation, the more cache-sensitive an application is, the greater performance benefits it experiences with each increment of cache allocation. In this work, we quantify performance saturation and cache sensitivities by the following metrics:

\begin{definition}[Max-Ways ($W_{max}$)]
\textit{The number of cache ways that correspond to performance saturation point of an application, is defined as \textbf{max ways}. For applications that are cache-insensitive, the max-ways is assumed to be two, as allocating less than two ways degrades performance by behaving as a directly-mapped cache. 
}
\end{definition}

\begin{definition}[Performance Sensitivity Factor ($\alpha$)]
\textit{For an application $A$, with loop nest execution times $t$, and allocated cache ways $w$, the \text{performance sensitivity factor} is defined as: 
\begin{equation}
\alpha_A = \sum_{i=3}^{\max{ways}} \frac{\left |\Delta t_i  \right |}{\Delta w_i}    
\end{equation}
where $\Delta t_i = \left | t_{i} -t_{i-1} \right |$ and $\Delta w_i = \left | w_i -w_{i-1}  \right |$, while $i$ denotes an observed data-point.  A high value of $\alpha$ indicates that the application's performance benefits could be increased by increasing the cache allocation when a particular loop nest is being executed.}
\end{definition}
In other words, $\alpha$ captures the change in an application's loop nest execution times as a function of cache ways allocated to that application. Probes profiles and sends the tuple $(\alpha$, \textit{max-ways}) for an application to the scheduler at runtime to guide the apportioning algorithms. In case we have data-points for all consecutive cache-ways (2,3,4,$\dots$n), $\Delta w_i = 1$.

\section{BCache Allocation Framework}
\label{sec:bcache}

\textbf{BCache Allocation Framework} obtains efficient cache partitions for  diverse application mixes, based on their execution phases. Typically, during a particular instance of system execution, an application executing a reuse loop with a higher value of memory footprint is allocated a greater portion of the LLC, compared to another application executing a streaming loop. The phase change occurs whenever an executing application changes its reuse-behaviour or exhibits a significant variation in its memory footprint, necessitating cache re-allocation. The phase information is relayed by respective probes to the scheduler, which in turn, invokes apportioning algorithms. 

For grouping different applications into cache partitions, Intel CAT~\cite{herdrich2016cache} introduces the notion of \textbf{Class-of-Service} (CLOS). Applications grouped in the same CLOS share the same cache partitions, and therefore must have similar cache attributes, and must be compatible (should not lead to conflict misses). 

Each CLOS can be mapped to a specific cache configuration. These cache configurations are expressed in terms of \textit{Capacity Bitmasks} (CBM), which is Boolean array of length $n$, where $n$ represents the total number of cache ways present in the system. The CBMs serve as input to Intel CAT for the actual partitioning.  $CLOS \rightarrow CBM$ mapping is obtained by setting the appropriate bits in the array. 

When the processes start executing, they are assigned to a suitable CLOS based on the triggered probe attributes ascertaining their compatibility. As the execution proceeds, the underlying cache configurations for each CLOS are adjusted as per the needs of their constituent processes, while maintaining data-locality. We will now describe our apportioning scheme and the allocation algorithms in detail.

\subsection{Cache Apportioning Scheme}
\label{bcache:scheme}
BCache allocation framework determines cache apportions for an application based on its loop memory footprint \& data-reuse behaviour by adopting an \textit{unit-based fractional} partitioning scheme. The underlying idea of this scheme is that each application will be allocated a fraction of the LLC, which is measured by estimating how much they contribute to the entire memory footprint. 

The memory footprint of each loop is first scaled as per whether it is a streaming loop or a reuse loop. Scaling ensures that the reuse loops get a bigger portion of cache than streaming loops. For an application $K$ with current executing loop $n$ having memory footprints $m$, the adjusted loop memory footprint is defined as: $^Km_{phase} = {}^Km * S_{n}$,
where $S_n$ denotes the \textit{scaling factor} of the current loop $n$. It can be adjusted dynamically and depending on the reuse behaviour exhibited by a loop, \textit{scaling factor} ranges from: 
\begin{equation}
S_n = \left\{\begin{matrix}
= 1, &\text{if n}\rightarrow \text{reuse} \\ 
< 1,  &\text{if n}\rightarrow \text{stream} 
\end{matrix}\right.    
\end{equation}
Based on the adjusted footprints values of all executing loops in the system, a fraction of the cache size that will be allocated to each application is determined. This fraction is calculated by determining how much does the current application's loop footprints contribute to the overall loop footprints in the system. In a system with $N$ applications, the fraction of cache allocated to application $K$ at time $t$ is given by: 
\begin{equation}
f^{cache}_t(K) = \frac{^Km_{phase}}{\sum_{I=1}^{N} {}^Im_{phase^*}}
\end{equation}
where $^Km_{phase}$ denotes the adjusted loop memory footprint of application $K$ and $^Im_{phase^*}$ denotes the adjusted loop memory footprint of application $I$ in their current execution phases. At $t=0$, all the applications will be allocated an initial fraction of cache according to the memory footprints of their first executing loop and the sum of these fractions for all the applications at $t=0$ will be equal to 1. This means that unless applications are grouped in same CLOS, each running application will start their execution with an isolated portion of cache. However, an important observation in the above equation is that each fraction is being calculated based on an individual application's current phase and will be recomputed during their phase changes. Obviously, different co-executing applications in the system might not undergo phase change at the same time as others. This might result in an overlap between cache portions allocated to different applications and the sum of cache fractions will deviate from 1. Based on the sum of cache fractions for all applications in the system, Table \ref{t1} shows the three possible system scenarios at time $t$. 

\begin{table}[ht]
\caption{Possible scenarios in fractional cache partitioning scheme}
\vspace{-6pt}
\resizebox{\columnwidth}{!}{
\begin{tabular}{|l|l|}
\hline
\textbf{\begin{tabular}[c]{@{}l@{}}Sum of Cache fractions\\ $\sum_{I=1}^{N} f^{cache}_t(I)$\end{tabular}} & \textbf{System Scenario} \\ \hline
$= 1$ & \begin{tabular}[c]{@{}l@{}}Cache is fully occupied and each application\\ group has separate portion of cache\end{tabular}     \\ \hline
$> 1$ & \begin{tabular}[c]{@{}l@{}}Cache is fully occupied and some applications\\ have overlapping cache portions\end{tabular}        \\ \hline
$< 1$ & \begin{tabular}[c]{@{}l@{}}Cache is not fully occupied and each\\ application group have separate portion\\of cache\end{tabular} \\ \hline
\end{tabular}}
\label{t1}
\end{table}

\subsection{Phase-Aware Dynamic Cache Allocation}
\label{bcache:algo}
Calculating cache partitions and generating their respective CBMs from cache fractions is achieved by \textit{Initial Phased Cache Allocation} (IPCA) and \textit{Phase Change Cache Allocation} (PCCA) algorithms. The IPCA algorithm (Algo \ref{a1}) is invoked whenever an application begins its execution. A separate socket is dedicated for applications with a higher value of $\alpha$, while low-$\alpha$ applications resides in the other sockets. If there are no cache-ways left in the high-$\alpha$ socket, then applications are grouped in different sockets according to the available cores in the sockets.

\begin{algorithm}[ht]
\footnotesize
\SetAlgoLined
\textbf{Input:} Process P\\
\KwResult{Find effective cache partition for an application based on its initial phase}
\textbf{****** Initializing Socket ******}\\
\eIf{$(P\rightarrow \alpha) > 1$}
{\If{high-$\alpha Socket.ways() > (P\rightarrow \max ways)$}
{Assign P to high-$\alpha$ Socket}
}
{{Assign P to a different socket with max available cores}}
\textbf{****** Selecting CLOS ******}\\
\eIf{$socket.availCLOS > (0.75 * socket.totalCLOS)$}
{find new clos for REUSE process\\
find compatible clos for STREAM process\\}
{compatible clos $=$ all clos where $total\_process \leq GFactor$\\
group STREAM process in compatible clos having $\max(\Delta_{t})$ value\\
group REUSE process in compatible clos having $\min (\frac{\alpha}{\Delta_{t}})$}
\textbf{****** Allocating cache ways ******}\\
$req\_ways = P\rightarrow getCacheApportion()$;\\
\eIf{$avail\_ways > req\_ways$}
{Allocate $req\_ways$ to $P\rightarrow CLOS$}
{Allocate $avail\_ways$ to $P\rightarrow CLOS$}
Generate appropriate bitmasks for allocated ways
 \caption{Initial Phased Cache Allocation (IPCA)}\label{a1}
\end{algorithm}

After socket allocation, a compatible CLOS is obtained for the process. If more than 75\% of total CLOS on a socket are vacant, each reuse processes gets its own CLOS. For processes executing streaming loops, separate CLOS is assigned only if there is no compatible CLOS available. Consequently, when the number of occupied CLOS in a socket crosses the 75\% mark, both streaming and reuse loops are grouped in compatible CLOS groups. Next, the IPCA algorithm estimates the initial cache demand by using the fractional scheme, which is passed into a bitmask generator to obtain the required CBMs after checking the available cache-ways in the system. In case the demand exceeds the available cache capacity, the framework allocates all possible ways to the CLOS and marks it as `unsatisfied'. Finally, Intel CAT is invoked by these generated bitmasks to create partitions.

\begin{algorithm}
\footnotesize
\SetAlgoLined
\textbf{Input:} Process* P\\
\KwResult{Find efficient cache partition for an application based on its phase change}
 $req\_ways = P\rightarrow getCacheApportion()$;\\
 $curr\_ways = P \rightarrow ways$;\\
 \eIf{$req\_ways > curr\_ways$}
 {$extra\_ways = req\_ways - curr\_ways$;\\
 $avail\_ways = socket \rightarrow getavailWays()$;\\
 \eIf{$avail\_ways \geq extra\_ways$}
 {Allocate $extra\_ways$ to $P\rightarrow CLOS$}
 {Allocate $avail\_ways$ to $P\rightarrow CLOS$} 
 }
 {$free\_ways = curr\_ways - req\_ways$;\\
 Allocate $free\_ways$ to most unsatisfied $P^{'}\rightarrow CLOS$;}
 \caption{Phase-Change Cache Allocation (PCC)}
 \label{a2}
\end{algorithm}

The BCache scheduler is implemented as an user-level process that ‘tasksets’ processes onto particular cores and manages them according to the probe information. The probes conveys the scheduler of possible phase-changes in application and PCCA algorithm (Algo. \ref{a2}) is invoked to update the existing cache partitions based on the new requirements. The PCCA algorithm obtains the new demand by using the same fractional apportioning scheme and checks for changes in cache demand. If demand increases, then extra ways are allocated to the CLOS if possible. However, if the current demand of the application can be satisfied with lesser ways, then extra ways are freed and allocated to the `most unsatisfied CLOS' ($\alpha_{\max}$) in the system. The required CBMs are obtained from the bitmask generator accordingly. A similar approach is followed when a loop finishes its execution, along with updating all system parameters like occupied CLOS, available ways, etc. Overall, the cache allocation algorithms manages the following aspects of the system:
\begin{itemize}
    \item \textbf{Preserving Data Locality}: All processes are confined to their initial socket throughout execution time. Applications that possess reuse loops with large footprints are kept within the same CLOS, to maintain a fixed subset of cache-ways for their execution period. On account of phase changes, the number of ways in that CLOS are expanded or shrunk in a way that ensures that an application doesn't start over with cold cache. 
    \item \textbf{Finding Compatible CLOS}: When total allocated ways reaches above 75\% of the maximum available ways, processes are grouped if their way-demands are similar (same cache configuration) and their difference in execution time ($\Delta t$) is maximum. This ensures that grouped processes won't co-execute for long. This allows aggressive grouping of streaming process. However, for reuse process, the compatibility relation also includes grouping with least sensitive processes ($\min \alpha$) in the system to ensure overall performance will not be impacted.
    \item \textbf{Limiting Total Process in a CLOS}:The allocation algorithms constraints that at any time instant, total processes in a given CLOS $\leq GFactor$ (adjustable parameter) to avoid excessive thrashing. This prevents CLOS over-crowding and limits application interference. 
\end{itemize}

\section{Results \& Evaluation}
\label{sec:res}

We evaluate Com-CAS to answer the following set of questions:
\begin{itemize}
     \item How accurately does Com-CAS predict the phased-cache requirements (\S \ref{res:att}) and how effective are the resulting partitions in terms of maximizing throughput? (\S\ref{res:thr}, \S\ref{sub:SLA})
    \item How does the dynamic proactive apportioning scheme compares with other static/dynamic apportioning mechanisms and state-of-art KPart? (\S \ref{res:thr}, \S \ref{res:cache})
    \item What were the internal apportioning decisions taken by \textit{Com-CAS} that allows it to maximize performance (\S\ref{res:case})?
    \item How are the individual process latencies? Is Com-CAS fair to all co-executing processes in the system? ( \S\ref{sub:SLA}) 
\end{itemize}

\textbf{Experimental Setup}: \textit{Com-CAS} is evaluated on a Dell PowerEdge R440 server with Intel Xeon Gold 5117 processors and way partitioning support through Intel CAT. The system is equipped with 2 sockets (14 cores \& 16 CLOS per socket) and 11-way set-associative, 19 MB shared LLC, running Ubuntu 18.04. Scaling Factor ($S_n$) was set to 0.1 for streaming process and $GFactor$ was set to 4. 


\textbf{Benchmarks}\footnote{Our benchmark selection was influenced by the cache-sensitivity study done in \cite{cook2013hardware}. This shows that popular SPEC CPU \cite{henning2006spec} benchmarks are mostly cache-insensitive, while Graph500 (GAPS) is highly cache-sensitive}: Four$^*$ diverse benchmark suites were chosen for our experiments: \textbf{GAP} Benchmark Suite \cite{beamer2015gap}, \textbf{PolyBench} Suite \cite{pouchet2012polybench}, \& \textbf{Rodinia} \cite{che2009rodinia}. These benchmarks represent memory-reuse heavy workloads from machine learning, data and graph analytics, etc, i.e domains that typically require multi-tenant execution environment \cite{amvrosiadis2018diversity}. Refer to Appendix \S A.1 for justification about benchmark selections and Table \ref{tab:ben} for details about benchmarks used for evaluation. For the sake of completeness, we also included popular SPEC 2017 \cite{henning2006spec} in our evaluations (Appendix \S A.2), inspite of it being mostly cache-insensitive. All the benchmarks were compiled with their default optimization flags (-O3/O2, -fopenmp, etc). 

\begin{table}[htbp]
\caption{List of Benchmarks used for evaluation}
\resizebox{\columnwidth}{!}{
\begin{tabular}{|l|l|l|}
\hline
\textbf{Suite} & \textbf{Benchmarks} & \textbf{Input} \\ \hline
Polybench & \begin{tabular}[c]{@{}l@{}}Lu Correlation Covariance Gemm Symm Syr2k\\ Cholesky Trmm 2mm 3mm Doitgen Floyd-Warshall\\ Fdtd-2d Heat-3d Jacobi-2d, Seidel-2d\\ Nussinov Gramschimdt Syrk Adi Ludcmp\end{tabular} & \begin{tabular}[c]{@{}l@{}}MINI, STANDARD,\\ LARGE (Training)\\ EXTRALARGE \\ (Testing)\end{tabular} \\ \hline
GAP & BC CC CC\_SV TC PR SSSP BFS & \begin{tabular}[c]{@{}l@{}}Uniform Random\\ Graph (Train - $2^{22}$ nodes, \\ Test - $2^{24}$ nodes)\end{tabular} \\ \hline
Rodinia & \begin{tabular}[c]{@{}l@{}}Backprop LU Heartwall CFD  Hotspot Srad \\ Particlefilter Streamcluster\end{tabular} & \begin{tabular}[c]{@{}l@{}}Default Inputs (Train)\\ Customized Inputs (Test)\end{tabular} \\ \hline
\end{tabular}}
\label{tab:ben}
\end{table}

\textbf{Parallel Execution}: \textit{Com-CAS} supports OpenMP Parallelism with cores assigned as specified by OpenMP pragmas and default environment variables. Probes are inserted before parallel-loop pragmas and cache requirements are calculated normally; threads are forked into assigned cores and allocated cache is shared among them. PCCA is triggered only when all threads join. Thus, Com-CAS operates on \textit{aggregate cache requirements of all threads at process-level}. GAPS \& Rodinia (partially) uses OpenMP compilation.

\textbf{Creating Realistic Workload Mixes}: Our goal is to evaluate \textit{Com-CAS} on `groups of applications that execute concurrently' (mix), on a real-world, multi-tenant shared cluster. Therefore, our mix creation process aims to encompass all possible resource demand scenarios (either core count/cache occupancy) that can occur in such a multi-execution setting - \textbf{(a)} mixes which exhibit minimal/no contention for the shared resources (\textbf{light-mixes}), \textbf{(b)} mixes whose resource demand just touches the system thresholds (\textbf{medium-mixes}) and \textbf{(c)} mixes composed of processes that can potentially saturate the system in terms of either cache demand or core demand (\textbf{heavy-mixes}). Thus, we created 35 workload mixes, with special emphasis on \textit{heavy-mixes} (21 mixes) that can stress-test the potential of our allocation algorithms. Apart from that, there are 8 \textit{light-mixes} and 6 \textit{medium-mixes}. The applications with $\alpha > 1$ are added in every mix since we are interested in determining the effectiveness of our framework in sensitive benchmarks.  

\textbf{System setting and metric}: We target batch execution for the mixes, a common setting to run modern workloads pertinent to data analytics, machine learning and graph processing in a multi-tenant setting. We use throughput maximization as the key metric for performance improvement. In addition, we use service level agreement (SLA) as a constraint; SLA degradation should be limited to 15\% in our setting. 

\textbf{Probe Overheads}: The probe-instrumented applications have two sources of overheads: (a) short probe-library calls during information broadcast that are in range of $\sim 10 \mu s$ (less than 1\% overhead) and (b) training the regression models that adds $\sim 120$ secs, and embedding adds $\sim 250$ secs to compilation time. Overall, the size of final instrumented binary is about $15\%$ greater than the original binary.

\begin{figure*}[!ht]
  \centerline{\includegraphics[width=1.0\linewidth]{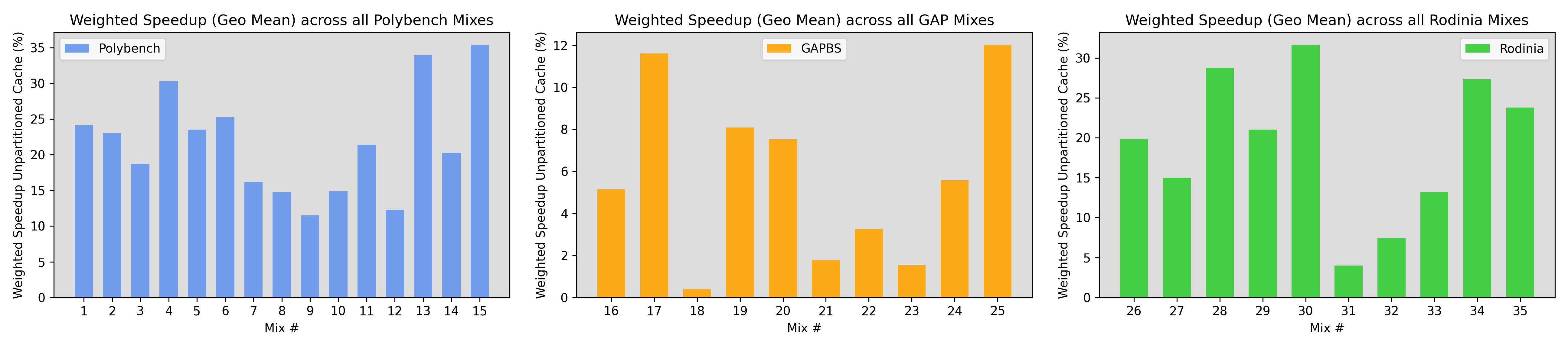}}
  
\vspace{-0.1in}
  \caption{Com-CAS's weighted speedup (15\% Avg) over \textbf{Unpartitioned Cache System} for all 35 mixes. The largest performance gains are in \textit{heavy-mixes}, where the workload resource requirement saturates the system and judiciously partitioning the cache is highly contingent upon utilizing dynamic phase attributes like phase-timing, reuse-behaviour, etc as done by \textit{Com-CAS}.}\label{overall_time}
\end{figure*}

\begin{figure*}[!ht]
  \centerline{\includegraphics[width=1.0\linewidth]{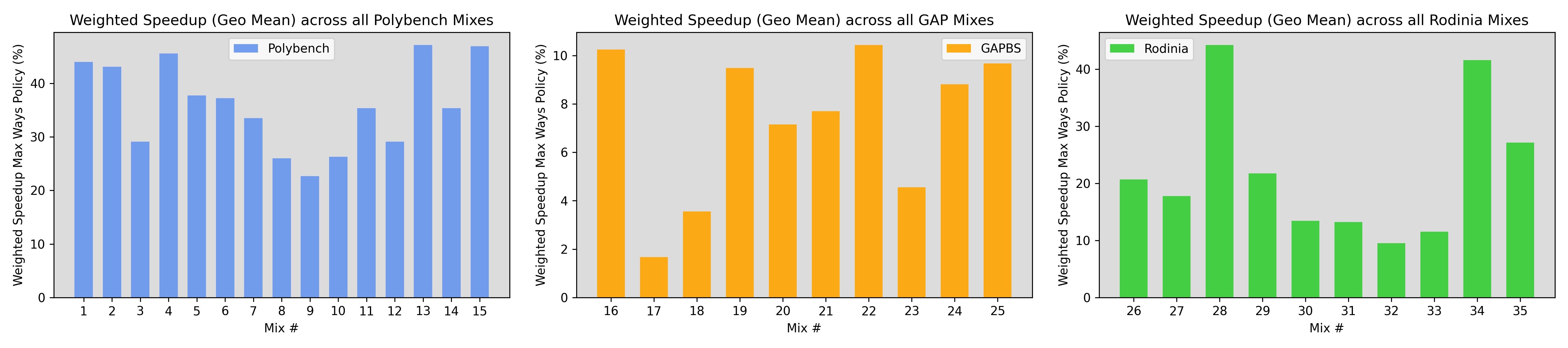}}
  
\vspace{-0.1in}
  \caption{Com-CAS's weighted speedup (21\% Avg) over \textbf{Max-Ways Cache Partitioning Policy} for all 35 mixes. Allocating fixed cache portion that amounts to each application's `max-ways' throughout its execution, results in maximum inter-application interference, where benefits of cache partitioning are squandered.}
  \label{overall_time_sp2}
\end{figure*}

\begin{figure*}[!ht]
  \centerline{\includegraphics[width=1.0\linewidth]{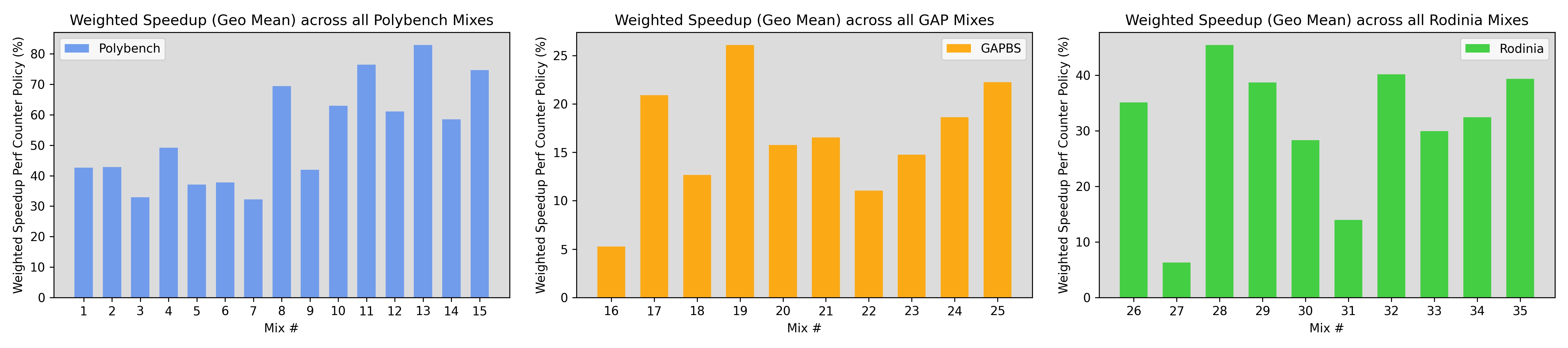}}
  
\vspace{-0.1in}
  \caption{Com-CAS's weighted speedup (32\% Avg) over \textbf{HW Perf Counter-based Cache Partitioning Policy} for all 35 mixes. The strategy of changing the cache-allocations at a fixed-interval (500ms) results in detection lag for applications' phase-changes and execution time worsens for all mixes. This effect can be seen even in \textit{light} \& \textit{medium mixes}, where performance degradation is more pronounced than other policies.}\label{overall_time_res}
\end{figure*}

\begin{figure*}[!ht]
  \centerline{\includegraphics[width=1.0\linewidth]{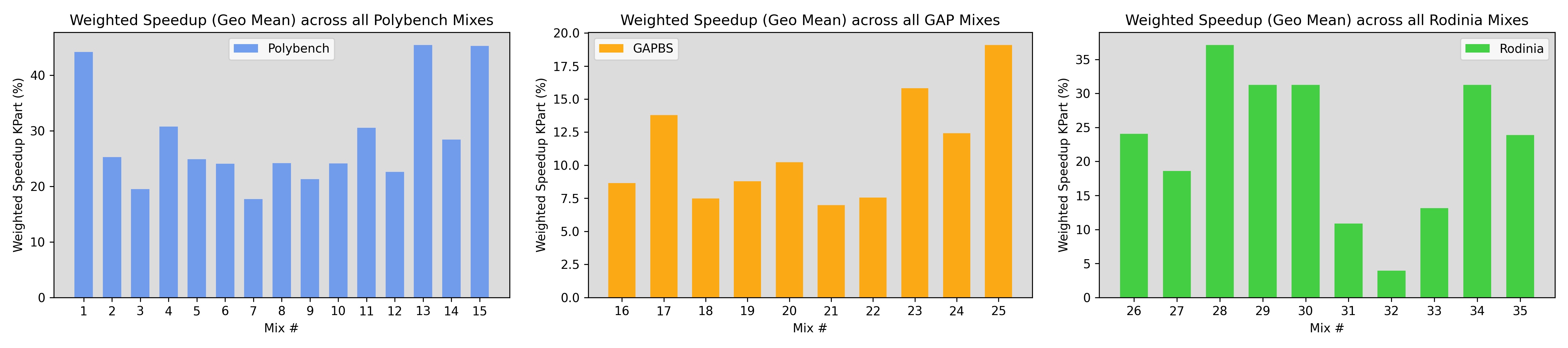}}
  \vspace{-0.1in}
  \caption{Com-CAS's weighted speedup (20\% Avg) over state-of-art \textbf{KPart} for all 35 mixes. \textit{Com-CAS} adjusts the cache allocations during each phase-change, while KPart changes them per $20^6$ cycles. As a result, the performance degradation occurs from non-adaptability to change in application's resource requirements and using fixed samples of cache apportions during each allocation.}\label{overall_time_kpart}
\end{figure*}
\textbf{Comparison with other systems}: We evaluate \textit{Com-CAS}'s performance against four different baselines as follows: 

\begin{itemize}
    \item \textbf{Unpartitioned Cache System}: mixes are simply compiled with normal LLVM and executed on Linux's CFS Scheduler \cite{kobus2009completely}, which is the most commonly used  scheduler in real-world systems with no cache apportioning.
    \item \textbf{Max Ways Static Policy}: Each process in the mix is statically allocated its \textit{max-ways} number of cache ways. The purpose of this policy is to test whether the knowledge of `optimal-cache ways' obtained by offline profiling, is sufficient to perform effective cache partitioning or not.
    \item \textbf{Perf-Counter Based Dynamic Policy}: The objective of this policy is to test whether an application's dynamic phase-behaviour can be predicted by employing hardware counters and then adjusting the cache partitions according to changes in applications' cache-misses and the IPC. This dynamic reactive policy starts off with apportioning the cache equally among all the processes and then at every fixed-time interval (500ms), allocates more cache-ways to processes that exhibits higher $\frac{Cache\-Miss}{IPC}$ ratio.
    \item \textbf{KPart} \cite{el2018kpart}: State-of-art cache partitioning system that groups applications in distinctive clusters by checking the compatibility (reduction in combined cache miss) among applications sharing a cluster by profiling. It then tries to find effective partitions for each cluster and updates them periodically per $20^6$ cycles of instructions. In our evaluation, we minimally modified the KPart code to account for sockets and high core-counts.
    .
\end{itemize}

\subsection{Loop Attribute Prediction Accuracies}
\label{res:att}
We first empirically show that \textit{Com-CAS} is able to accurately predict the loop execution times and thus, cache footprints. The workloads are profiled by the Probes Framework on training inputs (Appendix \S A.1) to obtain loop-timing models. The accuracy of these models are obtained by comparing the predicted values with the actual attribute values on the testing inputs (Table \ref{tab:ben}). The loop timing predicted by the Probes has an \textbf{86\%} average accuracy across all the benchmarks, with lowest accuracy of \textbf{64\%} ($nussinov$) and highest accuracy of \textbf{100\%} ($symm$). Loop timing is affected by the hoisting of probes because it replaces exact values with expected values leading to less precise timing and footprint values. This result can be seen especially in GAP benchmarks because these benchmarks tend to have more interprocedural loops compared to Polybench and Rodinia. Few benchmarks in Polybench ($symm, syr2k$) exhibits typically uniform behaviour across inputs sets, leading to almost perfect timing accuracy ($>95\%$). Overall, high timing accuracies (Appendix \S A.5) show that \textit{Com-CAS} was able to accurately predict the phase timings at fine levels of granularities. 
\vspace{-6pt}

\subsection{Improvement in Weighted Execution Time}
\label{res:thr}
Figs. \ref{overall_time}-\ref{overall_time_kpart} summarizes \textit{Com-CAS}'s weighted geometric mean speedup in terms of achieved throughput on the 35 mixes in each of the four baselines respectively. Compared to the \textbf{Unpartitioned Cache System} (Fig. \ref{overall_time}), \textit{Com-CAS} showed an average throughput improvement of \textbf{21\%} over all 15 \textit{Polybench} mixes, \textbf{5\%} over all 10 GAPS mixes and \textbf{19\%} over all 10 \textit{Rodinia} mixes. 

In particular, we found that \textit{heavy-mixes} from all three benchmarks (mixes \#8 - \#17, \#24 - \#25, \#31 - \#35) showed the largest performance gains. This shows that \textit{Com-CAS} manages to perform effective cache allocation and superior scheduling particularly when the workload resource demands are saturated and it prevents overwhelming of the system. In addition to that, the Unpartitioned Cache System can lead to complete sharing of ways and causing high inter-application interference, while \textit{Com-CAS} side-steps this issue by controlling the number of reuse processes grouped in the same CLOS ($GFactor = 4$). A discussion on \textit{Com-CAS}'s scalability is given in Appendix \S A.4. 

We found that \textbf{static max-ways policy} (Fig. \ref{overall_time_sp2}) performs significantly worse (\textbf{40\%}-\textbf{50\%} slowdown) in PolyBench and Rodinia over the baseline of \textit{Unpartitioned Cache System}. This is because mix heterogeneity \& mix size (specially \textit{medium-mixes} \& \textit{heavy-mixes}), as well as individual process $\alpha$, is higher than GAPS workload. All these factors results in excessive inter-application interference when all processes are allocated `max-ways' number of ways and hence benefits of cache partitioning is squandered. Conversely, \textbf{Perf counter-based reactive scheme} (Fig. \ref{overall_time_res}) that adopts a damage-control methodology at 500 ms granularity, fails to detect the phase changes that vary from few ms to $10^2$ ms and therefore performs significantly worse across all mixes. Interestingly, this policy incurs a greater amount of performance degradation in long-running, \textit{light} \& \textit{medium-mixes} compared to other policies, suggesting that even in workloads containing minimal number of processes (with their collective resource requirement below the system limit), the `detection lag' between an application's actual resource demand and resource allocation by the system results in injudiciously cache partitioning, leading to significant performance degradation. Although this policy can potentially perform better by fine-tuning the fixed-time interval for performing allocations, but this requires a trial-and-error approach and will vary according to the mix composition, which makes it impractical. 

On the other hand, Com-CAS obtains an average improvement of \textbf{44\%} across all the mixes over state-of-the-art \textbf{KPart} (Fig. \ref{overall_time_kpart}). Although \textit{KPart} uses a dynamic partitioning policy and the compatible applications clusters are figured out apirori, the performance gap primarily arises from the non-adaptability of KPart to the varied phase timings (Fig. \ref{case_study}) and inability to judge the appropriate amount of cache allocation for each application. While \textit{Com-CAS} changes the allocations at every loop-nest level with memory footprint variation, \textit{KPart} indiscriminately changes allocations every $20^6$ cycles with different partition samples, which hinders performance. This highlights an important distinction between \textit{KPart} \& \textit{Com-CAS}: \textit{KPart essentially treats each application like a black-box and attempts to find `best cluster fit' for a certain application mix, while Com-CAS is assisted by a compiler framework that analyzes each application separately and guides the cache apportioning at loop nest level}. Thus, the attributes that \textit{Com-CAS} uses to guide cache allocations (loop timing, footprints, reuse behaviour) are better indicators of an application's resource requirement than aggregate measures such as cache-misses and IPC. Similar to the \textit{unpartitioned cache baseline}, \textit{Com-CAS}'s performance enhancement is more pronounced in \textit{heavy-mixes}. 


\subsection{Com-CAS's Process Scalability}
\label{app:scal}
Fig. \ref{mix_time} (right) gives a glimpse into the apportioning decisions taken by \textit{Com-CAS} across all the 35 mixes. The `\textit{total apportions}' for each mix denote the number of partitioning decisions undertaken by \textit{Com-CAS}. Although the number of apportions rises with the total processes in the mix, we can observe that the rate of increase in apportions is not as steep compared to rate of increase in total processes in a mix. This hints at the scalability of \textit{Com-CAS} with respect to mixes containing larger number of processes. Another interesting observation is that maximum processes that are grouped in one CLOS at any point of time, is closer to 4 ($GFactor$), only when the mix size is large or number of applications exhibiting streaming behaviour is dominant. 
Fig. \ref{mix_time} (left) shows total mix execution time for all the four baselines. This represents the execution time for the longest running process in the mix (typically characterized by high $\alpha$) and determines the overall throughput. \textit{Com-CAS} achieves an improvement of \textbf{20\%} (avg) over unpartitioned cache and  \textbf{40\%} (avg) over \textit{KPart}.

\begin{figure*}[htbp]
\centerline{\includegraphics[width=0.8\linewidth]{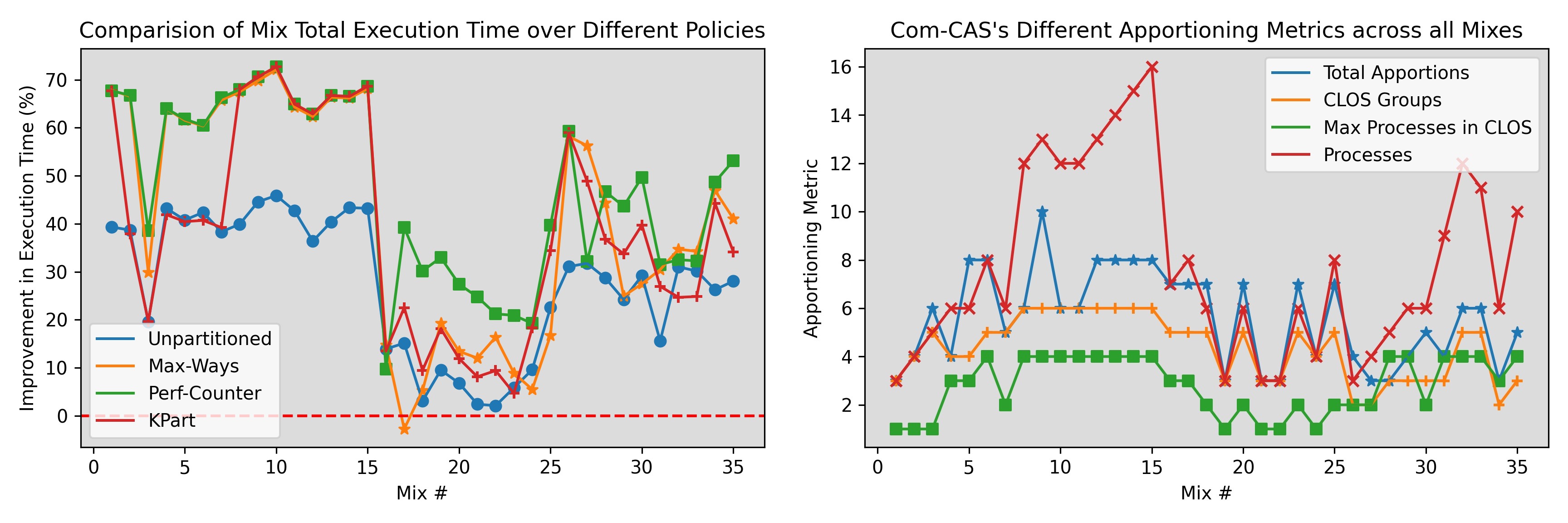}}
\caption{Total Mix Execution time and different Apportioning Metrics across all 35 mixes. The system throughput is determined by the total mix time. The different apportioning metrics give an overview of Com-CAS's decision making.}
\label{mix_time}
\end{figure*}


\subsection{Case Study: Closer look at Com-CAS}
\label{res:case}
In order to understand the decision making process of \textit{Com-CAS}, we present a case study comprising of a mix (\textit{\#16}) from GAPS, consisting of \textit{BC}, \textit{CC}, \textit{CC\_SV}, \textit{TC}, \textit{PR}, \textit{SSSP} \& \textit{BFS}. The loop attributes (loop-timing, memory footprint, performance-sensitivity factor ($\alpha$), max-ways) are estimated by Probes Compiler Framework and are encapsulated in the probes outside each loop. For each of the 7 processes in the mix, the loop-timing and memory footprint prediction accuracies are illustrated in Fig. \ref{case_study}, along with the values of $\alpha$ \& Max-Ways.

\begin{figure*}[ht]
\centerline{\includegraphics[width=1.0\linewidth]{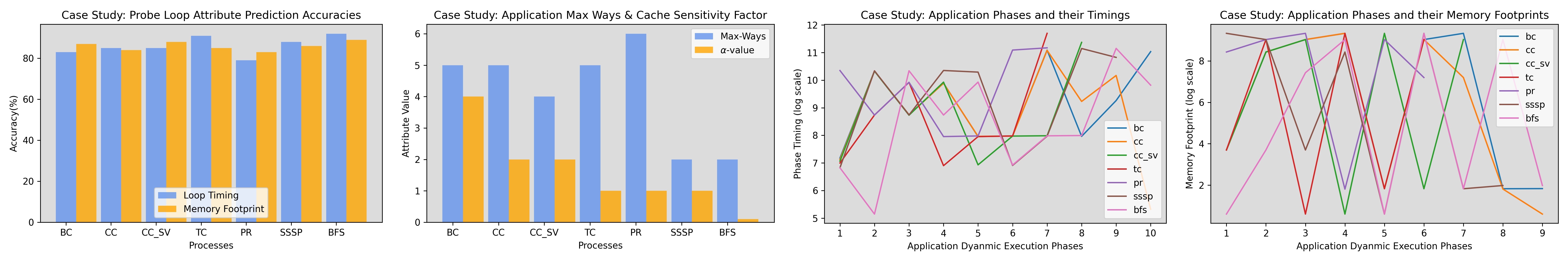}}
\caption{Case Study for 7-process mix (BC, CC, SSSP, TC, CC\_SV, BFS, PR) from GAPS: Probe Prediction Accuracies (left) and mix runtime behaviour (right). The rapid transitioning in log scale for phase-timing \& memory footprint illustrate the complex dynamic phase behaviour for this workload.}
\label{case_study}
\end{figure*}

During the execution, BCache Allocation Framework grouped \textit{BC} \& \textit{CC} in same socket because of their high $\alpha$ \& larger max-way values, while the rest five processes were put in another socket. At the start of the execution, \textit{BC}, \textit{CC}, \textit{CC\_SV}, \textit{TC}, \& \textit{PR}, were in reuse phase and exhibited footprints in the range of $10^3$ units, while \textit{SSSP} exhibits an initial footprint $\sim 10^9$ units, with streaming phase. \textit{BFS}'s initial phase footprint was only 4 units. Based on the footprint values (which are scaled by reuse behavior), $\alpha$, max-ways and phase-timing, processes \textit{BC}, \textit{CC}, \textit{CC\_SV}, \textit{TC}, \textit{PR} and \textit{SSSP} were initially allocated with 2 cache ways each, while \textit{BFS} was allocated 1 cache way. The interesting aspect of \textit{Com-CAS}'s apportioning is shown in the fact that although \textit{SSSP}'s footprint were considerably greater than other 5 process' footprints, its allocation was limited by its streaming behavior and max-ways value. 
Fig. \ref{case_study} shows that as the execution goes on further, processes \textit{BC} \& \textit{CC} undergoes through 10 dynamic phases, while there were 9 phase changes for \textit{BFS}, 8 each for \textit{CC\_SV} \& \textit{SSSP}, and 7 each for \textit{TC} \& \textit{PR}. These phases varied from an interval of $10^5$ to $10^{12}$ ns, while footprints changed from values as low as $~1-2$ units (negligible) to extremely high values such as $10^9$ units. The allocations for \textit{BC} changed from 2 to 4 (during phase $1\rightarrow2$) and then was shrunk back to 2 (during phase $7\rightarrow8$). Similarly, \textit{CC}'s ways were increased from 2 to 4 (during phase $2\rightarrow3$ ) and then was decreased back to 2 during the next phase transition. \textit {TC}'s ways were shrunk from 2 to 1 during phases $2\rightarrow3$ and then increased back to 2. All the other processes were either increased to their max-ways and then decreased back in subsequent phases or were kept constant throughout their execution ($SSSP$ \& $PR$). As we can see from Fig. \ref{case_study}, the application dynamic phase-behaviour are rapid and can transition from high footprint phase to low footprint one with varying reuse behaviour. Thus, the information broadcast from the probes helps us to make appropriate apportioning decisions.

\subsection{Effect on Cache Misses}
\label{res:cache}
\textit{Com-CAS}'s allocation algorithms focuses on enhancing the overall system performance by apportioning higher amount of cache to reuse-based processes that exhibit higher cache-sensitivity (high $\alpha$). As a result, the general trend observed is that reduction in LLC cache misses are shifted towards reuse-based applications that \textit{``need a greater amount of cache"}. This is important since the system throughput is determined by the execution time of process that has the longest execution time in mix and typically, they demand more cache and are cache-sensitive. Over the unpartitioned cache system, the processes \textit{Floyd-Warshall} (Polybench), \textit{BC} (GAP) \& \textit{SRad} (Rodinia) showed a reduction of \textbf{27.56\%}, \textbf{3.44\%} \& \textbf{44.9\%} respectively. For other applications present in the mix, the cache misses are either the same or have somewhat increased. This is because \textit{Com-CAS} prioritizes applications which exhibits higher degrees of cache-sensitivity. Also, for non-cache sensitive applications that have been instrumented with probes exhibit more misses because of the additional probe functions calls. Overall, \textit{Com-CAS} achieved a reduction of \textbf{4.6\%} in LLC misses over all the mixes in the unparitioned cache baseline. 

\subsection{SLA, Individual Process Latencies \& Fairness}
\label{sub:SLA}


\textit{Com-CAS} strives to enhance the overall throughput enhancement by catering to the requirements of processes that exhibit higher cache-sensitivity (high $\alpha$). However, following such a scheme can be detrimental to individual processes in the mix that are ``less-cache sensitive'' (low $\alpha$) or have minimal phase-timings (low $t$). Therefore, in addition to enhancing system throughput, it is important to focus on the individual process latencies in each of the workloads as well. For our experiments, we assume that \textbf{15\%} degradation compared to the `\textit{original-unmixed}' time, serves as an \textbf{acceptable latency degradation limit} (service-level-agreement/SLA). The original-unmixed time is measured by running the process individually in an unpartitioned cache system. In our experiments, we found that all 35 mixes adhere by this limit, i.e \textit{the individual process latency degradation is no more than 15\%}. To take a closer look, we choose one `representative mix' from each benchmark - the mix that obtained maximum throughput gains (\textit{best-performing mix}) over unpartitioned cache. Fig. \ref{poly_sla} compares the individual process times running as a mix with their \textit{original-unmixed} time in all 3 representative mixes and it's clear that none of the process show a latency degradation of more than 15\%. Note that in certain workloads with fully regular loops, the individual latencies are better than the original unmixed time; this achieved by the additional optimizations enabled by hoisting the probes to the outermost loop preheader. This effect is only noticeable in regular workloads with affine loops. Overall, across all 271 individual processes distributed in 35 mixes, \textit{Com-CAS} achieves an average performance degradation of \textbf{1.13\%}, compared to their original-unmixed time, and {\bf all} individual process latencies were within the SLA limit.

\begin{figure*}[!htbp]
\centerline{\includegraphics[width=1.0\linewidth]{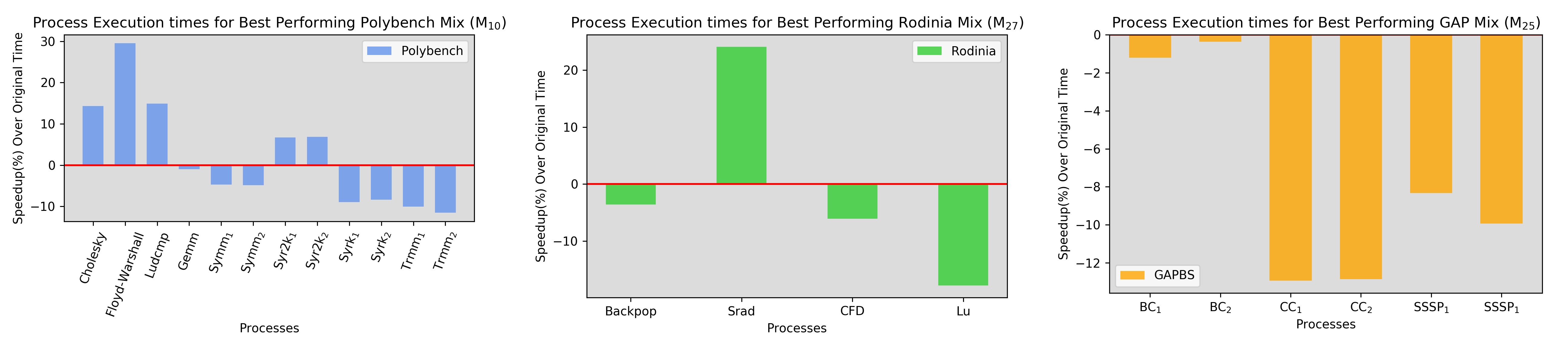}}
\caption{Comparison of individual process latencies with \textit{original-unmixed} time in representative mixes from each benchmark (\#10, \#25, \#27). All processes adhere to 15\% acceptable latency degradation limit (SLA).}
\label{poly_sla}
\end{figure*}

In addition to this, Fig. \ref{fairness} shows a comparison between \textit{Com-CAS} and the three other baseline partitioning policies - \textit{Max-Ways}, \textit{Perf-Counter}, and \textit{KPart}, on average. The fairness metric used here is \textbf{Jain's Fairness Index} \cite{jain1999throughput}, which is measured by $\frac{1}{1+\text{covariance}^2}$, where \textit{covariance} is obtained by $\frac{\sigma_{\text{mix-thoughput}}}{\mu_{\text{mix-thoughput}}}$. Fig. \ref{fairness} (left) shows the fairness index comparison in terms of mix size, which depicts that \textit{Com-CAS} maintains fairness in enhancing performance even for large mixes. This fact is further ratified by Fig. \ref{fairness} (right), where the fairness is compared with mixes having varying degrees of cache sensitivity ($\alpha_{mix}$ is defined as average $\alpha$ of all the processes in the mix) and \textit{Com-CAS} maintains an uniform fairness index of $\sim 0.99$ for different mix-categories.  

\begin{figure}[!ht]
\centerline{\includegraphics[width=1.0\linewidth]{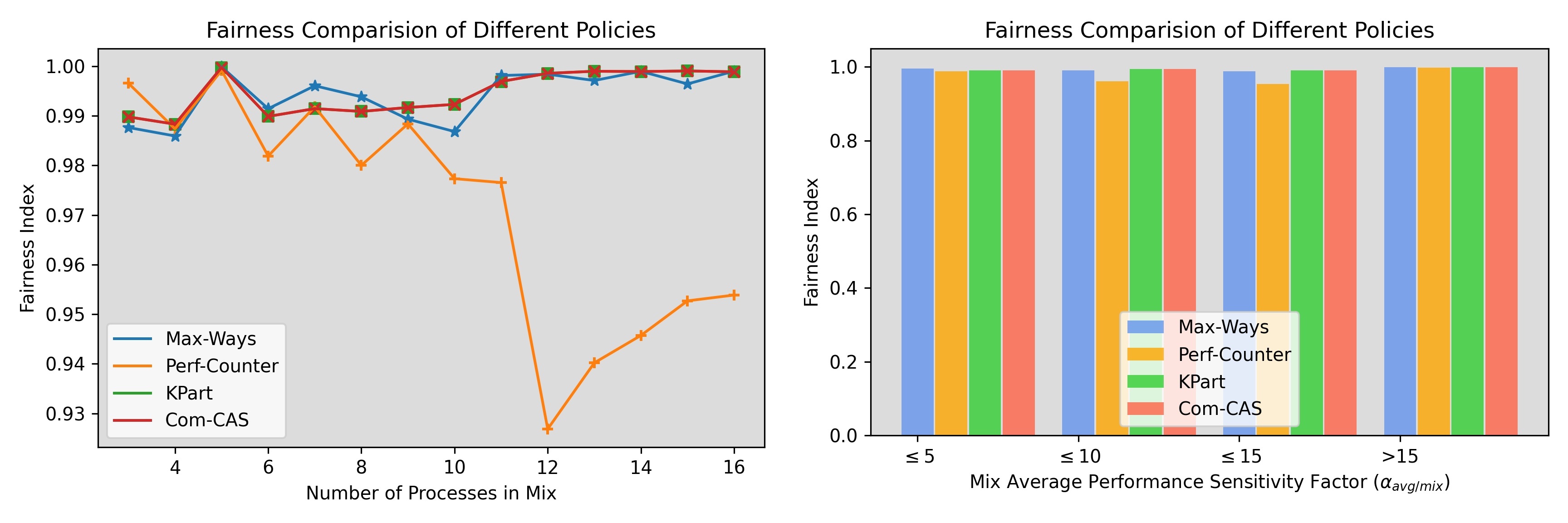}}
\caption{Fairness Comparison of Com-CAS with three partitioning policies as a function of total process and average mix cache-sensitivity}
\label{fairness}
\end{figure}
\section{Related Work}
\label{sec:background}

\textbf{Throughput-driven Approaches}: Recent works have focused on approaches to obtain finer-grain cache partitions \cite{wang2017swap,el2018kpart,pons2020phase}. KPart \cite{el2018kpart} (\S\ref{sec:res}) takes a clustering approach that requires profiling every pair of applications in the mix in addition to profiling them separately, i.e performing $O(n^2)$ profiling operations on top of $O(n)$ individual operations. Apart from being non-scalable for mixes containing large number of applications, this approach is not practical in real-world scenarios, where mix composition or even participating processes therein are not even known apriori. Similarly, another recent work \cite{xiang2018dcaps} proposes an approach of online monitoring using miss-curves also suffers from the same drawbacks - the fact that workloads might be input sensitive and the miss-curves might vary according to the execution scenario is ignored. An alternative approach \cite{pons2020phase} that accounts for application phase changes, where based on the execution behavior, applications are grouped into multiple categories by extensive profiling. However, such static categorization misses an important aspect - application phases are input-dependent and are dynamic as well as contingent upon entire mix composition. Thus, an application behaviour might change based on an different execution scenario or with new input data. Since this approach misses accounting for dynamic interactions and concurrency of different application phases, we feel that application classification is not a feasible approach. Also, the evaluations are primarily done on \textit{SPEC CPU} benchmarks \cite{henning2006spec}, which has been shown to be non-cache sensitive in \cite{cook2013hardware}.

\textbf{Latency-driven approaches}: Our work primarily focuses on \textit{throughput improvement} and the 15\% SLA requirement serves as an upper bound for \textit{allowable latency degradation}. Several resource managing mechanisms \cite{chen2019parties,zhu2016dirigent, lo2015heracles,roy2021satori,xiang2019emba,kasture2014ubik, kasture2015rubik, 9470449} use CAT to perform dynamic resource allocations for latency-critical server-based applications; these approaches also focus on many critical resources such as network bandwidths, memory hierarchies etc, which are critical in determining QoS guarantees, maximization of utilization and effective co-location. In contrast, our focus is on developing smart apportioning techniques for cache-sensitive applications for throughput enhancement while maintaining isolation. Furthermore, majority of these approaches have several drawbacks. For instance, \textit{PARTIES} \cite{chen2019parties} monitors behaviours of latency-critical applications every $~10^2ms$, and takes corrective actions only \textit{after} QoS violation has been detected. This makes it susceptible to detection lag and also ill-equipped to handle rapid phase-changes that are in order of few ms (Fig. \ref{case_study}). A multi-resource partitioning scheme \cite{roy2021satori}, constructs bayesian curves to find ``near-optimal'' resource configuration. It's agnostic to the fact that applications batches may not remain stable in an actual multi-tenant execution setting, and could be continuously updated (bayesian optimization solution can change). Additionally, all these approaches have been evaluated with smaller mix-sizes. On the other hand, \textit{Com-CAS} can be utilized in system configurations that execute data-intensive and cache-sensitive applications. Workloads in such scenarios are mostly optimized and tuned for in-core memory execution, and thus are agnostic to network connections, remote data etc.
\section{Conclusion}
\label{sec:con}
In this work, we proposed \textbf{Compiler-Guided Cache Apportioning System (Com-CAS)} for effectively apportioning the shared LLC leveraging Intel CAT. \textbf{Probes Compiler Framework} evaluates cache-atributes such as reuse behaviour, cache footprints, loop timings and cache sensitivity and relays them. Using these information, \textbf{BCache Allocation Framework} uses allocation algorithms to dynamically partitions the cache and schedules processes. \textit{Com-CAS} improved average throughput by \textbf{15\%} on unparititioned cache system, and \textbf{20}\% on state-of-art KPart~\cite{el2018kpart}, while maintaining the worst individual application execution time degradation within 15\% in a multi-tenancy setting. In addition, \textit{Com-CAS}'s scheduling minimizes the co-location of reuse applications and curtails their overlap. With improved throughput, fulfilled SLA agreement and reduced process interference, we contend that the proposed \textit{Com-CAS} is a viable system for multi-tenant setting.

\bibliographystyle{ACM-Reference-Format}
\bibliography{sample-base}


\end{document}